\newtheorem{theorem}{Theorem}[section]
\theoremstyle{definition}
\newtheorem{definition}[theorem]{Definition}
\theoremstyle{remark}
\numberwithin{equation}{section}
\definecolor{lightgray}{gray}{0.5}
\begin{document}

\title{A Review of the Unknown Input Observer with Examples}

\author{Sam Nazari}
\address{Department of Electrical \& Computer Engineering, Northeastern University, Boston, MA,02210}
\email{nazari.s@neu.edu}





\keywords{Unknown Input Observer, Disturbance Estimation}
\begin{abstract}
We consider the Unknown Input Observer and focus on three examples found in the literature.
\end{abstract}

\maketitle
\section{INTRODUCTION}
\label{sec:intro}  

This memo considers the design of observers for a class of linear dynamic systems in which system uncertainty can be modeled as an additive unknown disturbance term in the dynamic equation: 
\begin{align} \label{eq:linSys}
\dot{x}(t) &= Ax(t) + Bu(t) + Ed(t) \\
y(t)       &= Cx(t) \nonumber
\end{align}

\noindent where $x(t) \in \mathbb{R}^n$ is the state vector, $y(t) \in \mathbb{R}^m$ is the output vector, $u(t) \in \mathbb{R}^r$ is the known input vector and $d(t) \in \mathbb{R}^q$ is the unknown input or disturbance vector. If the unknown input distribution matrix, $E$, is not full column rank, then the rank decomposition: 

\begin{align*}
Ed(t) = E_1 E_2 d(t)
\end{align*}

can be applied where, $E_1$, is a full column rank matrix and $E_2 d(t)$ can be considered as a new unknown input. The term, $Ed(t)$, may be used to model additive disturbances or model uncertainties that are not known \textit{a priori}. Typical uncertainties are noise, plant nonlinearities, model reduction errors, parametric uncertainties and difficult to model interconnection terms in large scale systems. While not considered in this memo, a disturbance term may also appear in the output equation:
\begin{align*}
y(t) = Cx(t)+E_y d(t)
\end{align*}
Additionally, this memo does not consider the $Du(t)$ term in the system output equations.  For the sake of brevity and without loss of generality, we will not treat systems with this term. Note that the development of the theoretical material in this memo closely follows Chapter 3 of \cite{Chen99}. 
\section{Unknown Input Observer Theory} 

The problem of designing observers for a linear system with both known and unknown inputs has been studied since the 1970's \cite{Chen99}, \cite{Pat89}, \cite{Shaf2015}.  The motivation for studying this problem is that in practice, for a variety of reasons, many plants are modeled with disturbance terms as shown in Eq \ref{eq:linSys}. When modeled in such a way, traditional observers with the Luenberger structure become difficult to implement since they use all of the input signals in order to compute an estimate of the state vector. This limitation renders them ineffective for many practical applications. Contrastingly, the observer structure considered in this memo assumes no \textit{a priori} knowledge about the unknown inputs, rendering it more amenable for use in fault detection applications. Many investigators have addressed this problem with varying success and we refer the reader to Section 3.2 of \cite{Chen99} for a broad literature review. The class of observers examined in this memo is based on the \textit{Unknown Input Observer} (UIO) scheme proposed by Chen, Patton and Zhang.  

\begin{definition}
An observer is defined as an \textit{unknown input observer} \cite{Sch12,Pat89} for the system described by Eq \ref{eq:linSys}, if its state estimation error vector, $e(t)$, defined as:

\begin{align}
e(t) = x(t) - \hat{x}(t)
\end{align}

approaches zero asymptotically regardless of the presence of unknown inputs, $d(t)$, in the system.  Furthermore, the structure for a full order UIO is given by the dynamic system: 

\begin{equation} \label{eq:uio}
\begin{aligned}
\dot{z}(t) &= Fz(t)+TBu(t) + Ky(t) \\
\hat{x}(t) &= z(t)+Hy(t)
\end{aligned}
\end{equation}
\end{definition}

\begin{figure}[H]
    \centering
    \includegraphics[scale=0.4]{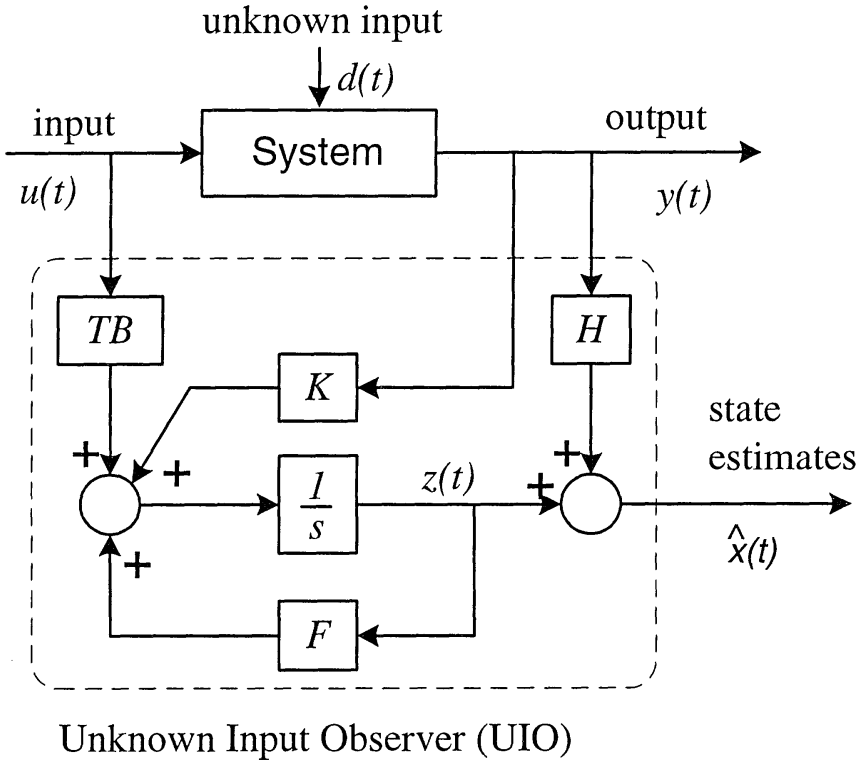}
    \caption{UIO Structure where $K = K_1 + K_2$.}
    \label{fig:UIOBD}
\end{figure}

where $\hat{x} \in \mathbb{R}^n$ is the state estimate, $z \in \mathbb{R}^n$ is the state of the full-order dynamic observer, and $F,T,K,H$ are matrices to be designed for achieving unknown input decoupling. A block diagram of the UIO is shown in Figure \ref{fig:UIOBD}.  From the block diagram, one can see that the UIO is essentially a dynamic system that decouples the state estimation dynamics from the disturbance term in the original system. By expanding $\dot{e}(t)$ one obtains: 

\begin{align*}
\dot{e}(t) &= (A-HCA-K_1 C)e(t) + \left [ F-(A-HCA-K_1 C) \right ]z(t) \\
&+[K_2 - (A-HCA-K_1 C)]y(t) \\
&+ [T-(I-HC)]Bu(t) + (HC-I)Ed(t)
\end{align*}

and it is easy to see that in order to make the estimation error a function of $Fe(t)$:

\begin{align}
\dot{e}(t) = Fe(t) \label{eq:efe}
\end{align}

the equations: 

\begin{align}
0       &= (HC-I)E \label{eq:HCIE}\\
T       &= I - HC \\ 
F       &= A-HCA-K_1 C \label{eq:F}\\
K_2     &= FH
\end{align}
must hold. If all the eigenvalues in $F$ are stable, then $e(t)$ will approach zero asymptotically.  Note that Eq \ref{eq:efe} is not a function of $E$ or $d$.  Therefore, the estimation error approaches zero independently of the disturbance terms, achieving the desired decoupling of the state estimate from the unknown disturbance inputs.  

\begin{theorem}The necessary and sufficient conditions for the system in Eq \ref{eq:uio} to be an UIO for the system defined in \ref{eq:linSys} are: 
\begin{enumerate}
    \item $rank(CE)=rank(E)$
    \item $(C,A_1)$ is a detectable pair, where $A_1 = A-E[(CE)^T CE]^{-1} (CE)^T CA$
\end{enumerate}
\end{theorem}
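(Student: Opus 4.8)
The plan is to exploit the fact that the design equations already derived for the reduced error dynamics $\dot e(t)=Fe(t)$ (Eq.~\ref{eq:efe}) split into two logically independent requirements. The disturbance-decoupling requirement is that $H$ satisfy $(HC-I)E=0$ (Eq.~\ref{eq:HCIE}); the stability requirement is that $K_1$ can be chosen so that $F=A-HCA-K_1C$ (Eq.~\ref{eq:F}) is Hurwitz, since the remaining equations $T=I-HC$ and $K_2=FH$ merely define $T$ and $K_2$ once $H$ and $F$ are fixed. I would show that condition (1) is precisely the solvability condition for the decoupling requirement and that condition (2) is precisely the solvability condition for the stability requirement; asymptotic decay of $e(t)$ then follows from $F$ being Hurwitz, exactly as noted after Eq.~\ref{eq:efe}.

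First I would analyze $(HC-I)E=0$, i.e. $HCE=E$, as a linear matrix equation in the unknown $H$. Transposing to $(CE)^T H^T=E^T$ and invoking the standard consistency criterion for linear systems, the equation is solvable exactly when every row of $E$ lies in the row space of $CE$. Because $CE=C\,E$ always forces the reverse inclusion $\mathrm{rowspace}(CE)\subseteq\mathrm{rowspace}(E)$, solvability is equivalent to the two row spaces coinciding, i.e. to $\mathrm{rank}(CE)=\mathrm{rank}(E)$, which is condition (1). Under this condition $E$, and hence $CE$, has full column rank, so $(CE)^TCE$ is invertible and a particular solution is $H=E[(CE)^TCE]^{-1}(CE)^T$; substituting it into Eq.~\ref{eq:F} collapses $F$ to $F=A_1-K_1C$ with $A_1$ exactly the matrix in the statement.

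Next I would convert the stability requirement into detectability. With $H$ fixed as above, $F=A_1-K_1C$ is an output-injection modification of $A_1$, so by the classical pole-assignment/PBH result there exists a $K_1$ making $A_1-K_1C$ Hurwitz if and only if $(C,A_1)$ is detectable, which is condition (2). Sufficiency then follows at once: given (1) and (2), pick the $H$ above, pick $K_1$ to stabilize $F$, and set $T=I-HC$, $K_2=FH$, $K=K_1+K_2$; all four design equations hold with $F$ Hurwitz, so Eq.~\ref{eq:uio} is a UIO. For necessity, the existence of a UIO forces the four equations with $F$ Hurwitz; in particular $(HC-I)E=0$ must be solvable, which returns condition (1).

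The step I expect to be the main obstacle is the necessity of condition (2). The subtlety is that the general solution of $HCE=E$ is $H=E[(CE)^TCE]^{-1}(CE)^T+Z\bigl(I-(CE)[(CE)^TCE]^{-1}(CE)^T\bigr)$ with $Z$ free, so a priori a UIO might be built from some admissible $H$ other than the particular solution above, producing a different $A-HCA$ and apparently a weaker demand than detectability of the specific pair $(C,A_1)$. I would close this gap with a PBH argument: if $(C,A_1)$ fails to be detectable there exist $\lambda$ with $\mathrm{Re}(\lambda)\ge0$ and $v\neq0$ satisfying $A_1v=\lambda v$ and $Cv=0$. The crux is the identity $CAv=(CE)[(CE)^TCE]^{-1}(CE)^TCAv$, which follows from $Cv=0$ together with $A_1v=\lambda v$ and shows $CAv\in\mathrm{colspace}(CE)$; this makes the $Z$-dependent part of $F$ annihilate $v$, so that $Fv=\lambda v$ for every admissible $H$ and every $K_1$. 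Hence $F$ can never be Hurwitz and no UIO exists, contradicting the hypothesis and proving that detectability of $(C,A_1)$ is necessary. Verifying this annihilation identity carefully is the technical heart of the argument.
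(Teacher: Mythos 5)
Your proposal is correct, and it is worth noting that the memo itself offers no argument for this theorem --- its stated proof is simply a pointer to \cite{Chen99} --- so you have supplied a genuine proof where the paper has none; it is essentially the classical Chen--Patton/Darouach argument. Your organization is the right one: condition (1) is exactly the consistency condition for the linear equation $H(CE)=E$ (since $\mathrm{rowspace}(CE)\subseteq\mathrm{rowspace}(E)$ holds automatically, solvability is equivalent to equality of the ranks), and condition (2) is exactly the existence of a stabilizing output injection for $A_1$. You also correctly isolated the one delicate step, necessity of (2) over the entire solution family $H=H_0+Z\bigl(I-(CE)(CE)^{\dagger}\bigr)$ with $H_0=E(CE)^{\dagger}$, and your annihilation identity does check out: if $A_1v=\lambda v$ and $Cv=0$, then
\[
0=CA_1v=CAv-(CE)(CE)^{\dagger}CAv=\bigl(I-(CE)(CE)^{\dagger}\bigr)CAv,
\]
so $CAv\in\operatorname{col}(CE)$, hence $(H-H_0)CAv=Z\bigl(I-(CE)(CE)^{\dagger}\bigr)CAv=0$ and $Fv=Av-HCAv-K_1Cv=A_1v=\lambda v$ for every admissible $H$ and every $K_1$, so no choice of gains can make $F$ Hurwitz. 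Two small repairs are needed. First, your claim that condition (1) by itself implies $E$ (and hence $CE$) has full column rank is false in general (take $E$ with two equal columns); the invertibility of $(CE)^TCE$ instead rests on the standing assumption, made at the start of the memo via the rank decomposition $Ed=E_1E_2d$, that $E$ may be taken to have full column rank. Second, in the necessity direction you assert that the existence of a UIO ``forces the four equations''; strictly this requires a sentence explaining that because $d(t)$, $u(t)$ and the initial conditions are arbitrary, each corresponding term in the expansion of $\dot e(t)$ must vanish identically --- though this is precisely the level of rigor the memo itself adopts following Eq.~\ref{eq:efe}.
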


\begin{proof}
Please see \cite{Chen99}
\end{proof}
According to \cite{Shaf2015}, the disturbance estimate can now be obtained, if needed, by the relation:

\begin{equation}
\begin{aligned}
\hat{d} = (CE)^{\dagger}[\dot{\hat{y}}-CA\hat{x}-CBu]
\end{aligned}
\end{equation}

Next we provide a step by step design procedure for the UIO that will be utilized in the examples that ensue. 
\subsection{Design Procedure for UIOs} 
\label{sec: UIODesign}
 
We will now describe a procedure that can lead to effective UIO design.  It can be shown that Eqn \ref{eq:HCIE} is solvable if and only if the condition $rank(CE) = rank(E)$ is met \cite{Chen99}. Therefore, the rank condition  $rank(CE) = rank(E)$ must first be checked. If this condition is not met, then a UIO does not exist.  If the condition is met, then we may compute the following observer matrices: 

\begin{center}
\begin{align*}
H   &= E[(CE)^T CE]^{-1} (CE)^T\\ 
T   &= I-HC \\
A_1 &= TA
\end{align*}
\end{center}
Next, the observability of the pair $(C,A_1)$ must be checked. As shown in Eq \ref{eq:efe}, an important step in designing a UIO is to stabilize $F=A_1-K_1 C$.  If the pair $(C,A_1)$ is detectable, then this can be achieved by using pole placement to choose the appropriate gain matrix $K_1$.  It can be shown that the observability of the pair $(C,A_1)$ is equivalent to the observability of the pair $(C,A)$ \cite{Sch12}. Therefore, if the pair $(C,A_1)$ is observable or at least detectable then the gain, $K_1$, in Eq \ref{eq:F} can be obtained for the UIO. Hence, if $(C,A_1)$ is observable then a UIO exists and $K_1$ should be computed using pole placement. 
\vspace{3mm}

On the other hand, if the pair $(C, A_1)$ is not observable, then one must construct a transformation matrix, $P$, for the observable canonical decomposition of the system:

\begin{align*}
PA_1P^{-1} &= 
\left [ \begin{array}{cc} 
A_{11} & 0 \\ 
A_{12} & A_{22} 
\end{array} \right ] \hspace{5mm} A_{11} \in \mathbb{R}^{n_1 \times n_1} \\
CP^{-1} &= \left [ C^* \hspace{2mm} 0 \right ] \hspace{17mm} C^* \in \mathbb{R}^{m \times n_1}
\end{align*}

where $n_1$ is the rank of the observability pair $(C,A_1)$, and $(C^*, A_{11})$ is observable. If any of the eigenvalues of $A_{22}$ are unstable, then a UIO does not exist. If the eigenvalues of $A_{22}$ are stable, then $(C, A_1)$ is detectable and we may select $n_1$ desirable eigenvalues and assign them to $A_{11} - K_p^1 C^*$ using pole placement to obtain $K_p^1$. Next compute 

\begin{align*}
K_1 = P^{-1}K_p = P^{-1}[(K_p^1)^T \hspace{2mm} (K_p^2)^T]^T
\end{align*} 

where $K_p^2$ can be any $(n-n1) \times m$ matrix, because it does not affect the eigenvalues of $F$. Finally, compute the observer matrices:
\begin{align*}
F &= A_1 - K_1 C \\ 
K &= K_1 + K_2 = K_1 + FH.
\end{align*}

In order to clarify the design procedure we consider some numerical examples next.
\section{Numerical Examples} 
\subsection{Third order system with no inputs}\label{ex:ex1}
The first example considered is taken from page 76 of \cite{Chen99}. There, the following parameter matrices are used: 

\begin{equation}
\begin{aligned}
A = \displaystyle \left [ \begin{array}{rrr}
-1 & 1 & 0 \\
-1 & 0 & 0 \\
 0 & -1 & -1
 \end{array} \right ], \hspace{2mm} C = \displaystyle \left [ \begin{array}{rrr} 1 & 0 & 0 \\ 0 & 0 & 1 \end{array} \right ], \hspace{2mm} E = \displaystyle \left [ \begin{array}{rrr} -1 & 0 & 0 \end{array} \right ]
\end{aligned}
\end{equation}

\begin{par}
\noindent First we check that $rank(CE) = rank(E) = 1$:
\end{par} 
\begin{verbatim}
rank(C*E)
rank(E)
\end{verbatim}

        \color{lightgray} \begin{verbatim}
ans =       ans =

      1             1
\end{verbatim} \color{black}

\begin{par}
The rank is equal to $1$ as needed. The observer matrices are computed next:
\end{par}

\begin{verbatim}
H = E*inv((C*E)'*(C*E))*(C*E)'
T = eye(3)-H*C
A1 = T*A
\end{verbatim}
        \color{lightgray} \begin{verbatim}
H =                 T =                 A1 =

     1     0           0    0   0            0   0   0
     0     0           0    1   0           -1   0   0
     0     0           0    0   1            0  -1  -1

\end{verbatim} \color{black}

\begin{par}
Next, the observability of the pair $(A_1,C)$ must be checked. If it is full rank ($rank(obsv(A_1,C))=3$), then we may apply pole placement:
\end{par}
\begin{verbatim}
rank(obsv(A1,C))
\end{verbatim}

        \color{lightgray} \begin{verbatim}
ans =

     3

\end{verbatim} \color{black}
    
Now we proceed to apply pole placement in order to place the observer poles at $[-2,-10,-5]$.

\begin{verbatim}
K1 = place(A',C',[-2,-10,-5])'
\end{verbatim}

        \color{lightgray} \begin{verbatim}
K1 =

    6.7231   -3.7816
    7.5581  -11.3458
   -3.0543    8.2769

\end{verbatim} \color{black}

\begin{par}
Finally, the observer $F$ and $K$ matrices are computed:
\end{par} 
\begin{verbatim}
F = A1-K1*C
K = K1 + F*H
\end{verbatim}
    \color{lightgray} \begin{verbatim}
F =                                  K =

   -6.7231         0    3.7816              0    -3.7816
   -8.5581         0   11.3458             -1.0  -11.3458
    3.0543   -1.0000   -9.2769              0     8.2769

\end{verbatim} \color{black}

\begin{figure}[h]
    \centering
    \includegraphics[scale=0.45]{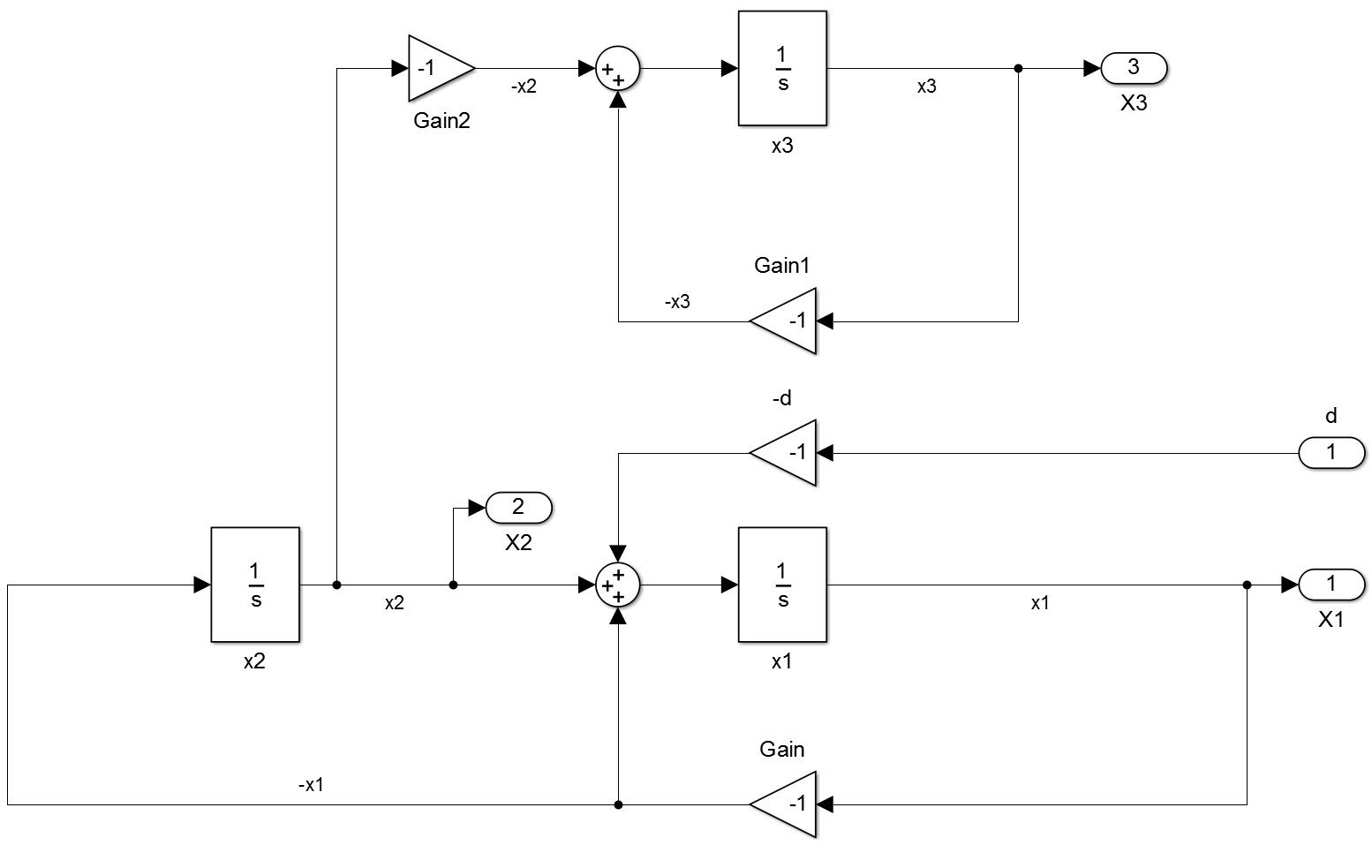}
    \caption{Simulink block diagram of the dynamics subsystem.}
    \label{fig:dynamics}
\end{figure}
\begin{figure}[h]
    \centering
    \includegraphics[scale=.45]{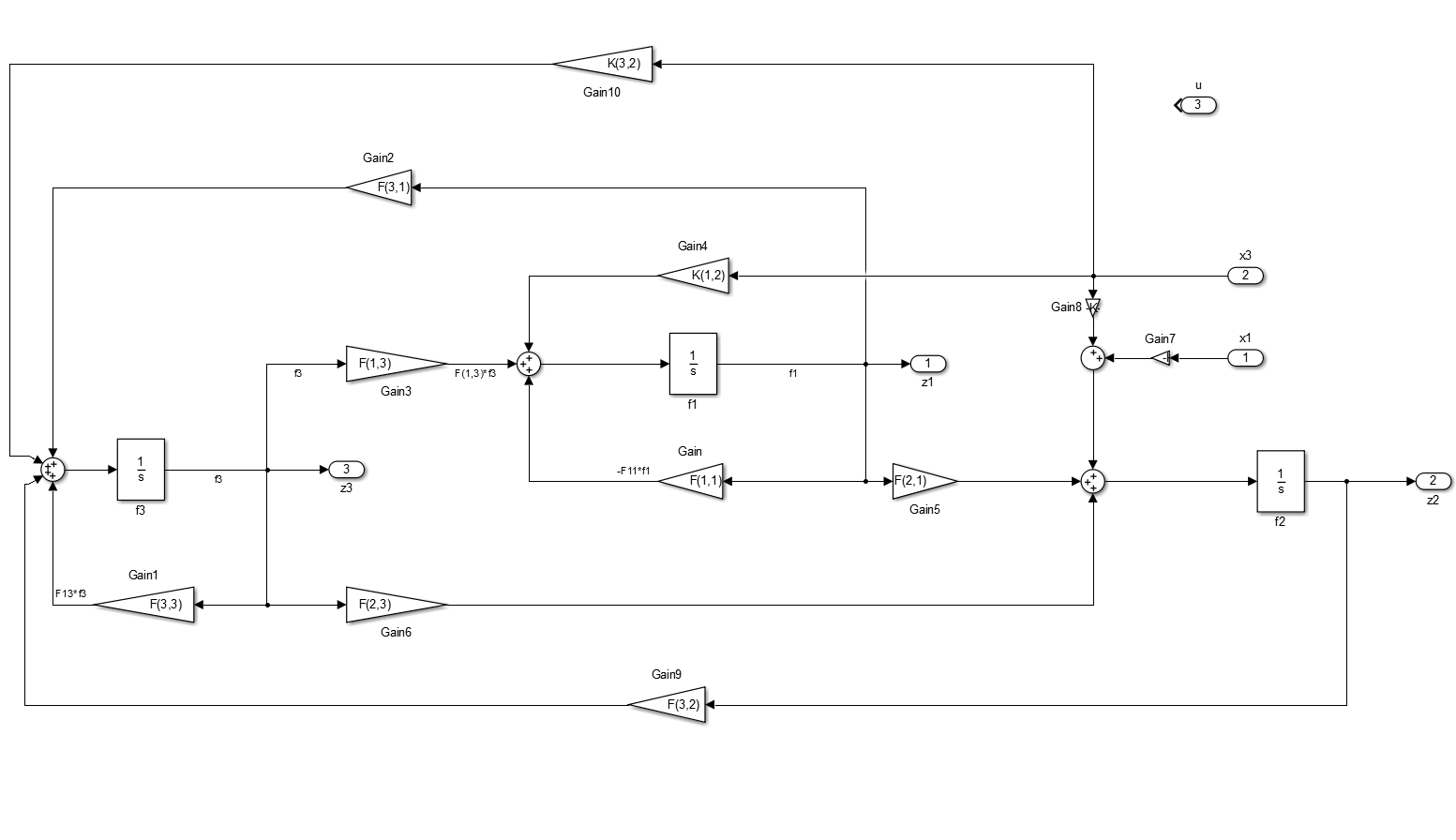}
    \caption{Simulink block diagram of the UIO.}
    \label{fig:UIO}
\end{figure}
Let us examine the Simulation results with a disturbance occurring at $t=10$ seconds. This is showin in Figure \ref{fig: resultsofUIO}. The initial conditions of the dynamic system as:
\begin{verbatim}
x1_0 = 100;
x2_0 = -100;
x3_0 = 1;
\end{verbatim}

In Figure \ref{fig: resultsofUIO}, one is able to see the state trajectories in the solid lines.  The dotted 'x' represent the estimate from the UIO.  For the given disturbance and the chosen observer pole locations, the observer seems to converge (remove all estimation error) within five seconds.  Figures \ref{fig:dynamics} and \ref{fig:UIO} show the block diagram in detail.  Note that we have chosen not to estimate the disturbance itself. All the files for this simulation can be found in the \color{lightgray}\begin{verbatim}C:\Users\sqn4594\Documents\FDIR \end{verbatim}\color{black} folder. Specifically, the file \begin{verbatim} bookEx.m \end{verbatim} and the Simulink model \begin{verbatim} exOne.xsl \end{verbatim} contain everything. 
\begin{figure} [H]
    \centering
    \includegraphics [width=3.5in]{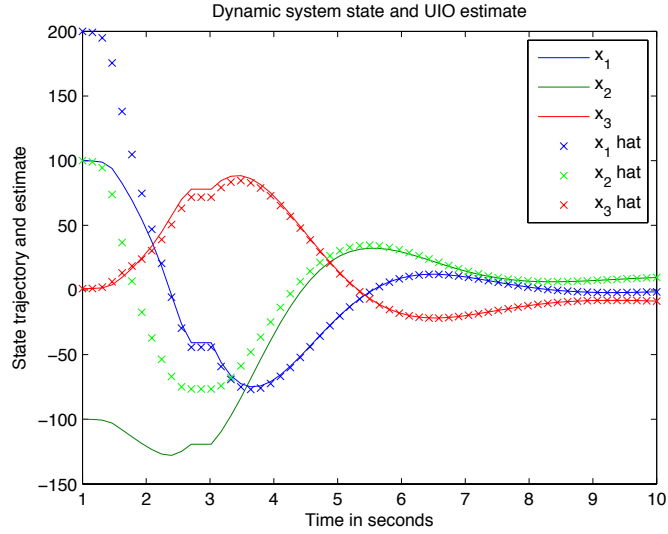}
    \caption{State response and UIO estimate for example}
    \label{fig: resultsofUIO}
\end{figure}

\subsection{Car Suspension System}\label{ex:ex2}
The next example is a quarter car model.  The system matrices are: 

\begin{equation}
\begin{aligned}
A = \displaystyle \left [ \begin{array}{rrrr}
0 & 1 & 0 & 0 \\
\frac{-ks}{mb} & \frac{-bs}{mb} & \frac{ks}{mb} & \frac{bs}{mb} \\
 0 & 0 & 0 & 1 \\
 \frac{ks}{mw} & \frac{bs}{mw} & \frac{-ks-kt}{mw} & \frac{-bs}{mw}
 \end{array} \right ] \hspace{2mm} B= \displaystyle \left [ \begin{array}{rr} 
 0 & 0\\
 0 & \frac{1000}{mb}\\
 0 & 0 \\
 \frac{kt}{mw} & \frac{-1000}{mw}
 \end{array} \right ] \hspace{2mm} C = \displaystyle \left [ \begin{array}{rrrr} 
1 & 0 & 0 & 0 \\ 
0 & 1 & 0 & 0 \\
0 & 0 & 1 & 0 \\
0 & 0 & 0 & 1
\end{array} \right ] \hspace{2mm} E = \displaystyle \left [ \begin{array}{r} 
1\\ 
1\\
1\\
1 \end{array} \right ]
\end{aligned}
\end{equation}

where $m_b$, represents the car body and the mass, $m_w$, represents the wheel assembly. The spring ,$k_s$, and damper, $b_s$, represent the passive spring and shock absorber placed between the car body and the wheel assembly.  The spring, $k_t$, models the compressibility of the pneumatic tire. The variables $x_b$, $x_w$ and $r$ are the car body travel, the wheel travel, and the road disturbance, respectively.  The force, $f_s$, which is applied between the body and the wheel assembly, is controlled by feedback. We assume that the disturbance force may enter at any of the state variables, not just the road disturbance. In MATLAB, we assign the following values to the parameters:
\begin{verbatim}
%----System Physical Constants----%
mb = 300;    % kg
mw = 60;     % kg
bs = 1000;   % N/m/s
ks = 16000 ; % N/m
kt = 190000; % N/m
\end{verbatim}

Therefore, the system matrices become:

\begin{verbatim}
% --------System Matrices(It is a SIMO system)---------%
A = [ 0 1 0 0; [-ks -bs ks bs]/mb ; ...
      0 0 0 1; [ks bs -ks-kt -bs]/mw];
B = [0 0; 0 10000/mb ; 0 0; [kt -10000]/mw];
C = eye(4);
E=[1;1;1;1];
\end{verbatim}
 
\begin{figure}[H]
    \centering
    \includegraphics[scale=.9]{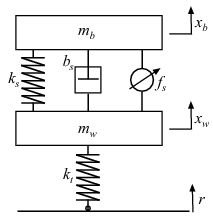}
    \caption{Free Body Diagram of the Quarter Car Suspension model (idealization).}
    \label{fig:car_sus_fbd}
\end{figure}

\color{lightgray} \begin{verbatim}
A =                                             B = 

   1.0e+03 *                                        1.0e+03 *

         0    0.0010         0         0                  0         0
   -0.0533   -0.0033    0.0533    0.0033                  0         0.0333
         0         0         0    0.0010                  0         0
    0.2667    0.0167   -3.4333   -0.0167                  3.1667   -0.1667

C =                                             E = 

     1     0     0     0                              1
     0     1     0     0                              1
     0     0     1     0                              1
     0     0     0     1                              1

\end{verbatim} \color{black}
Now we can obtain an optimal gain for control of the suspension system by using LQR below.  We will not dwell too much on this part of the design work since our main focus is the UIO design.  

\begin{verbatim}
%---------LQR design----------%
Q=[.25 0 0 0;0 4 0 0;0 0 1 0;0 0 0 4];
R=50*eye(2);
[K1 l s]=lqr(A,B,Q,R);
\end{verbatim}

In order to design the UIO, we first check to see if the $rank(CE) = rank(E) = 1$
\vspace{1em}
\begin{verbatim}
rank(C*E)
rank(E)
\end{verbatim}

        \color{lightgray} \begin{verbatim}
ans =       ans =

     1              1
\end{verbatim} \color{black}
    
\begin{figure}[H]
    \centering
    \includegraphics[scale=.3]{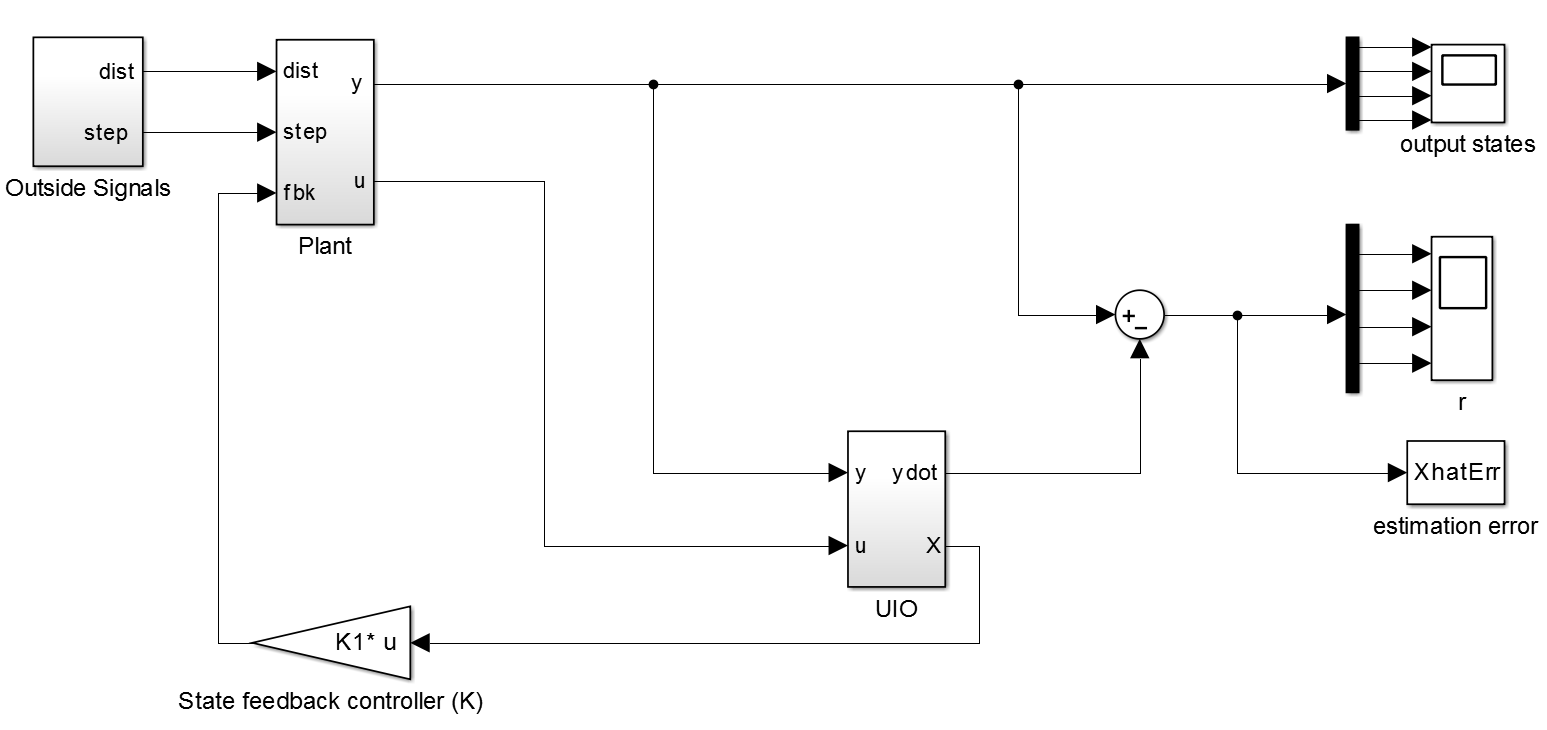}
    \caption{Root leve Simulink Block diagram architecture of the car suspension system in example two.}
    \label{fig:car_sus_fbd}
\end{figure}
Next we compute the observer matrices

\begin{verbatim}
H = E*inv((C*E)'*(C*E))*(C*E)'
T = eye(4)-H*C
A1 = T*A

f=zeros(4,4)
v=[-4 -4 -4 -4]
F=diag(v)

k1=inv(C)*(A-F-H*C*A)
k2=F*H
k=zeros
k=k1+k2
\end{verbatim}

\begin{figure}[H]
    \centering
    \includegraphics[scale=.4]{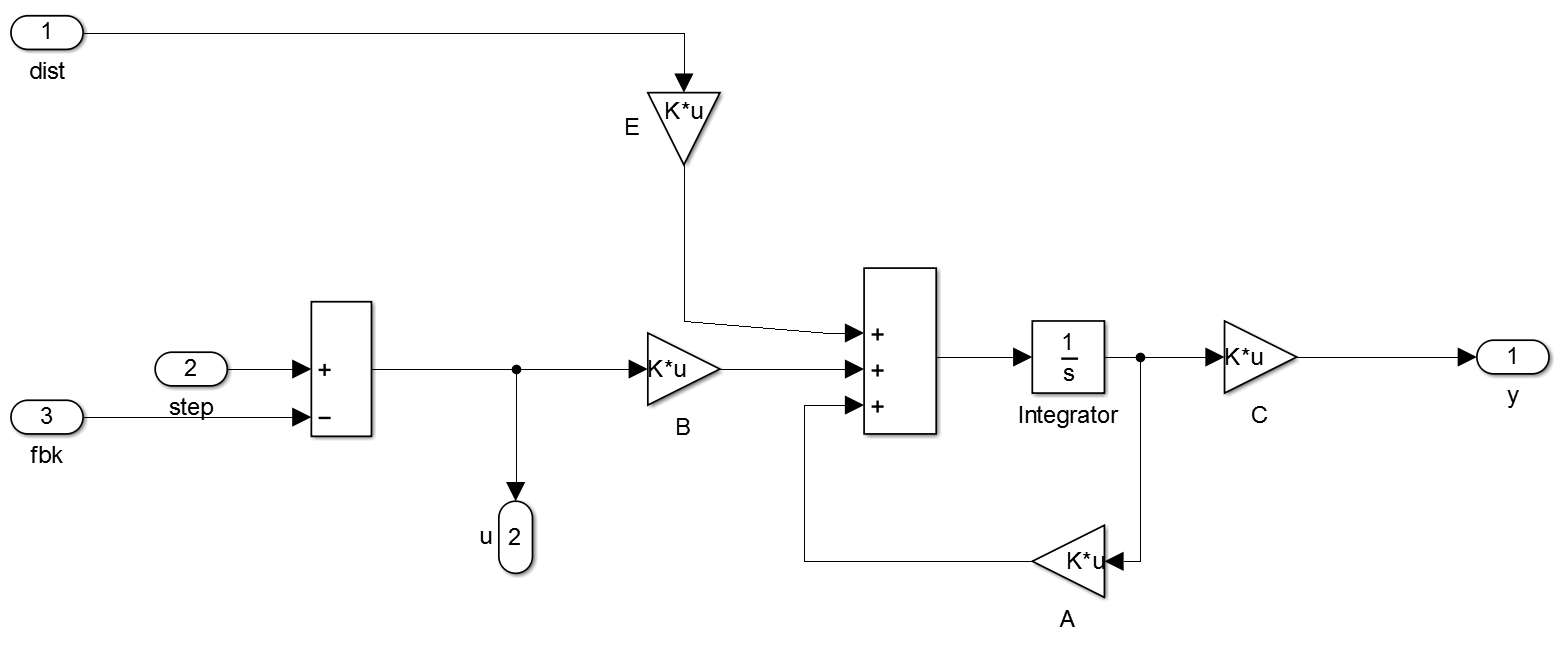}
    \caption{Quarter car suspension system dynamics implementation in Simulink.}
    \label{fig:carSusBD}
\end{figure}

\begin{figure}[H]
    \centering
    \includegraphics[scale=.45]{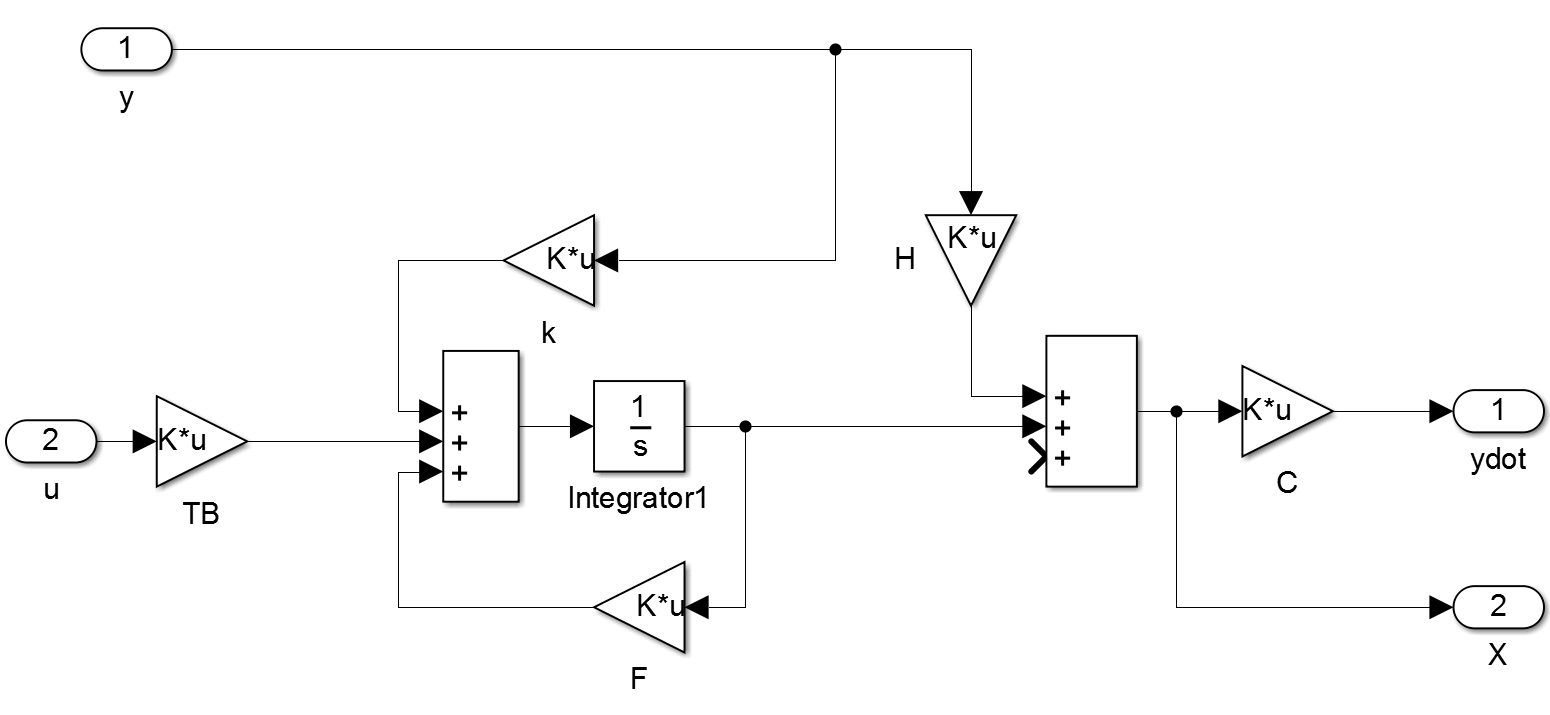}
    \caption{Simulink implementation of the Unknown Input Observer for the quarter car suspension model.}
    \label{fig:carSusUIO}
\end{figure}
        \color{lightgray} \begin{verbatim}
H =                                         T = 

    0.2500    0.2500    0.2500    0.2500         0.7500   -0.2500   -0.2500   -0.2500
    0.2500    0.2500    0.2500    0.2500        -0.2500    0.7500   -0.2500   -0.2500
    0.2500    0.2500    0.2500    0.2500        -0.2500   -0.2500    0.7500   -0.2500
    0.2500    0.2500    0.2500    0.2500        -0.2500   -0.2500   -0.2500    0.7500

A1 =                                            f = 

   1.0e+03 *

   -0.0533   -0.0026    0.8450    0.0031               0    0   0   0
   -0.1067   -0.0069    0.8983    0.0064               0    0   0   0
   -0.0533   -0.0036    0.8450    0.0041               0    0   0   0
    0.2133    0.0131   -2.5883   -0.0136               0    0   0   0

v =                                             F = eye(-4)

    -4    -4    -4    -4
    
    

k1 =                                            k2 = 4X4 matrix '-1'

   1.0e+03 *

   -0.0493   -0.0026    0.8450    0.0031
   -0.1067   -0.0029    0.8983    0.0064
   -0.0533   -0.0036    0.8490    0.0041
    0.2133    0.0131   -2.5883   -0.0096






k =

   1.0e+03 *

   -0.0503   -0.0036    0.8440    0.0021
   -0.1077   -0.0039    0.8973    0.0054
   -0.0543   -0.0046    0.8480    0.0031
    0.2123    0.0121   -2.5893   -0.0106

\end{verbatim} \color{black}

Next we form $A1$ and check observability

\begin{verbatim}
rank(obsv(A1,C))
\end{verbatim}

        \color{lightgray} \begin{verbatim}
ans =

     4

\end{verbatim} \color{black}
    
Since the observability matrix has full rank, we can go on and simulate the system to obtain the dynamic estimator performance. The initial conditions are set to be: 

\begin{verbatim}
X_0 = [-1;10;3;5]
\end{verbatim}

        \color{lightgray} \begin{verbatim}
X_0 =

    -1
    10
     3
     5

\end{verbatim} \color{black}

\begin{figure}
    \centering
    \includegraphics[scale=.45]{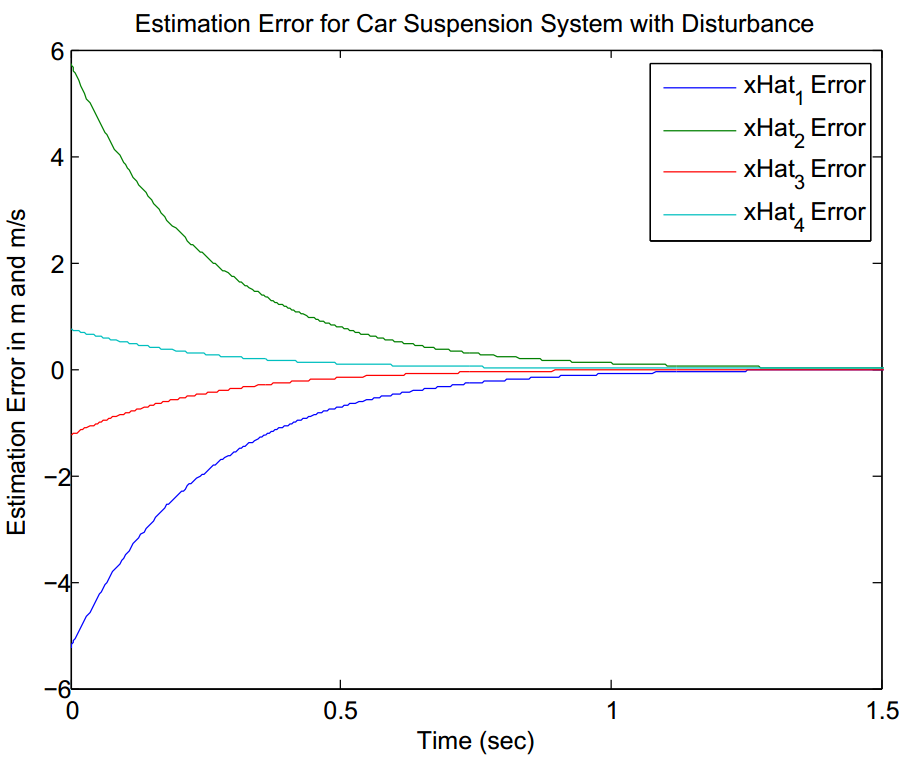}
    \caption{UIO results for the car suspension model.}
    \label{fig:carSusResults}
\end{figure}

Figure \ref{fig:carSusResults} shows the state estimation error for the UIO used in the car suspension model. The observer seems to converge within 1.5 seconds.  Figures \ref{fig:carSusUIO} shows the implementation of the UIO in Simulink.  All the files for this simulation can be found in the \color{lightgray}FDIR \color{black} folder. Specifically, the file \begin{verbatim} carsus.m \end{verbatim} and the Simulink model \begin{verbatim} car_sus.xsl \end{verbatim} contain everything. 

\subsection{Coupled Mass-Spring System}

\begin{figure}[H]
    \centering
    \includegraphics[scale=.55]{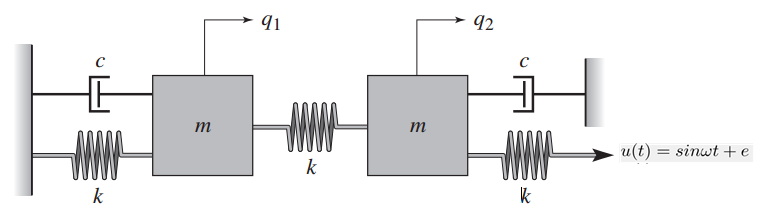}
    \caption{The free body diagram for the two mass and spring system.}
    \label{fig:dualMS}
\end{figure}

Consider the coupled spring mass system shown in Figure \ref{fig:dualMS}. The coupled mass spring system is a well studied mechanical system in control theory. The input to this system is the sinusoidal motion of the end of the rightmost spring, and the output is the position of each mass, $q_1$ and $q_2$. The disturbance input is also shown in the figure as $e$. In state space form, the equations of motion are: 
\begin{equation}
\begin{aligned}
x = \displaystyle \left [ \begin{array}{r}
q_1 \\
q_2 \\
\dot{q}_1 \\
\dot{q}_2 \\
\end{array} \right ], \hspace{2mm}
A = \displaystyle \left [ \begin{array}{rrrr}
0 & 0 & 1 & 0 \\
0 & 0 & 0 & 1 \\
\frac{-2k}{m} & \frac{k}{m} & \frac{-c}{m} & 0 \\
\frac{k}{m} & \frac{-2k}{m} & 0 & \frac{-c}{m}
 \end{array} \right ], \hspace{2mm} 
B= \displaystyle \left [ \begin{array}{c} 
 0 \\
 0\\
 0  \\
\frac{k}{m}
 \end{array} \right ], \hspace{2mm} 
C = \displaystyle \left [ \begin{array}{rrrr} 
1 & 0 & 0 & 0 \\ 
0 & 1 & 0 & 0 \\
\end{array} \right ], \hspace{2mm} 
E = \displaystyle \left [ \begin{array}{r} 
0\\ 
0\\
0\\
1 \end{array} \right ]
\end{aligned}
\end{equation}
In MATLAB, we assign the following values to the parameters:
\begin{verbatim}
%----------System Parameters---------%
k   = 1   %
m   = 1   % kg
c   = 1 %
\end{verbatim}

        \color{lightgray} \begin{verbatim}
k =         m =         c = 

     1          1               1

\end{verbatim} \color{black}
to obtain the following state space formulation: \\ 
    \begin{verbatim}
%----------State Space Formulation---------%
A   = [0 0 1 0;
       0 0 0 1;
       -2*k/m k/m -c/m 0;
       k/m -2*k/m 0 -c/m]

B   = [0;0;0;k/m]
C   = [1 0 0 0;0 1 0 0;0 0 1 0;0 0 0 1]
D   = 0
E   = [0;0;0;1]
\end{verbatim}

        \color{lightgray} \begin{verbatim}
A =                         B =             C =                  D =        E = 

     0  0   1   0               0                1  0   0   0        0              0
     0  0   0   1               0                0  1   0   0                       0
    -2  1  -1   0               0                0  0   1   0                       0
     1 -2   0  -1               1                0  0   0   1                       1

\end{verbatim} \color{black}
In order to control the placement of the masses in the system, we utilized LQR.  The following MATLAB code accomplishes this objective: 

\begin{figure}[H]
    \centering
    \includegraphics[scale=.35]{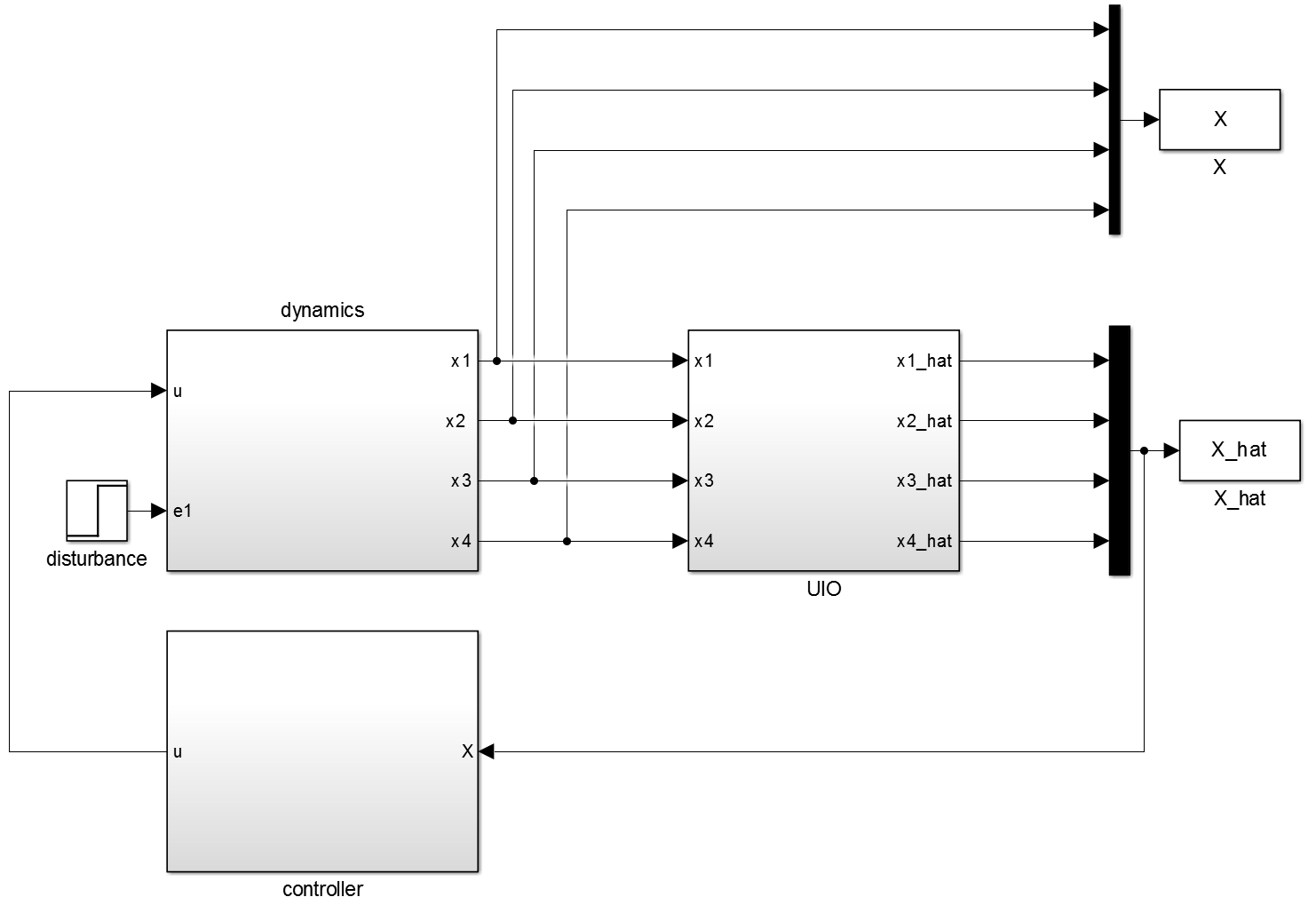}
    \caption{The root level Simulink block diagram for the coupled mass spring dynamic system.}
    \label{ex:msMDLroot}
\end{figure}

    \begin{verbatim}
%----------LQR Control---------%
Q=[100 0 0 0;
    0 100 0 0;
    0 0 100 0;
    0 0 0 100]
R=1
[Klqr ll ss]=lqr(A,B,Q,R)
\end{verbatim}

        \color{lightgray} \begin{verbatim}
Q =                     R =       Klqr = 

   100  0    0    0         1           -1.3620   10.4615    2.0165   10.0419
     0  100  0    0
     0  0    100  0
     0  0    0    100

\end{verbatim} \color{black}
Now we are ready to begin the UIO design.  The first step is to check that $rank(CE) = rank(E)$:  
    \begin{verbatim}
%----------Step 1: rank(CE) = rank(E) ?= 1---------%
rank(C*E)
rank(E)
\end{verbatim}

        \color{lightgray} \begin{verbatim}
ans =     ans = 

     1          1
\end{verbatim} \color{black}
Next we are ready to compute the first set of observer matrices:
    \begin{verbatim}
%----------Step 2: Compute observer matrices---------%
H = E*inv((C*E)'*(C*E))*(C*E)'
T = eye(4)-H*C
A1 = T*A
\end{verbatim}

        \color{lightgray} \begin{verbatim}
H =                 T =                     A1 = 

     0  0   0   0       1   0   0   0               0   0   1   0
     0  0   0   0       0   1   0   0               0   0   0   1
     0  0   0   0       0   0   1   0              -2   1  -1   0
     0  0   0   1       0   0   0   0               0   0   0   0

\end{verbatim} \color{black}
\begin{figure}[H]
    \centering
    \includegraphics[scale=.45]{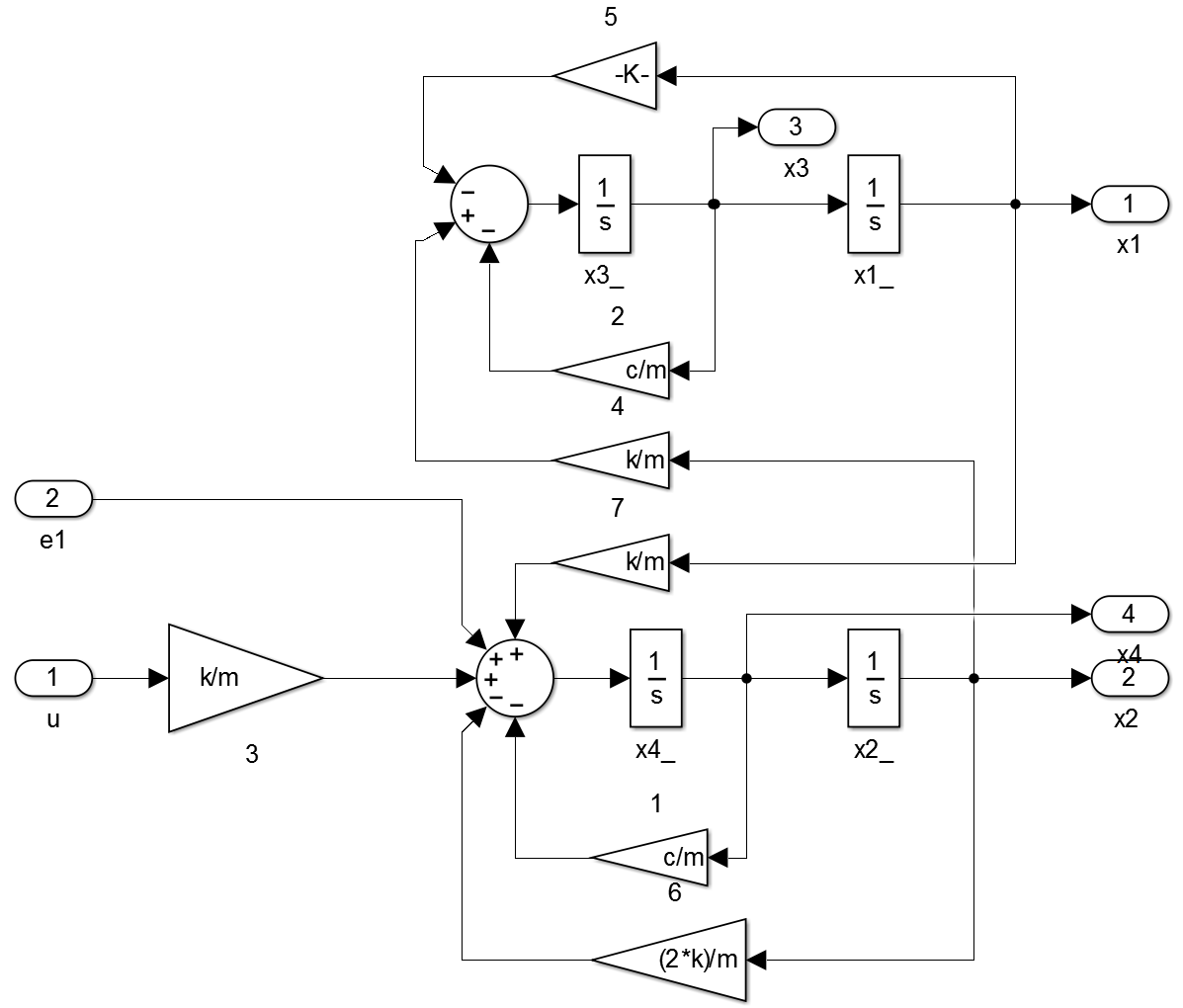}
    \caption{The mass spring system dynamics implemented in linearized form in Simulink.}
    \label{fig:msDynSim}
\end{figure}

Now that $A_1$ has been obtained, we must check the observability of the pair $(C,A_1)$:
    \begin{verbatim}
%----------Step 3:Check (C,A1) rank---------%
rank(obsv(A1,C)) 
\end{verbatim}
        \color{lightgray} \begin{verbatim}
ans =

     4

\end{verbatim} \color{black}
Since the pair $(C,A_1)$ is full rank, it is observable.  Therefore, we can use pole placement for the observer gain $K_1$:
    \begin{verbatim}
%----------Step 4: Choose observer poles---------%
K1 = place(A',C',[-2,-10,-5,-3])'
\end{verbatim}

        \color{lightgray} \begin{verbatim}
K1 =

     2     0     1     0
     0    10     0     1
    -2     1     4     0
     1    -2     0     2

\end{verbatim} \color{black}

\begin{figure}[H]
    \centering
    \includegraphics[scale=.5]{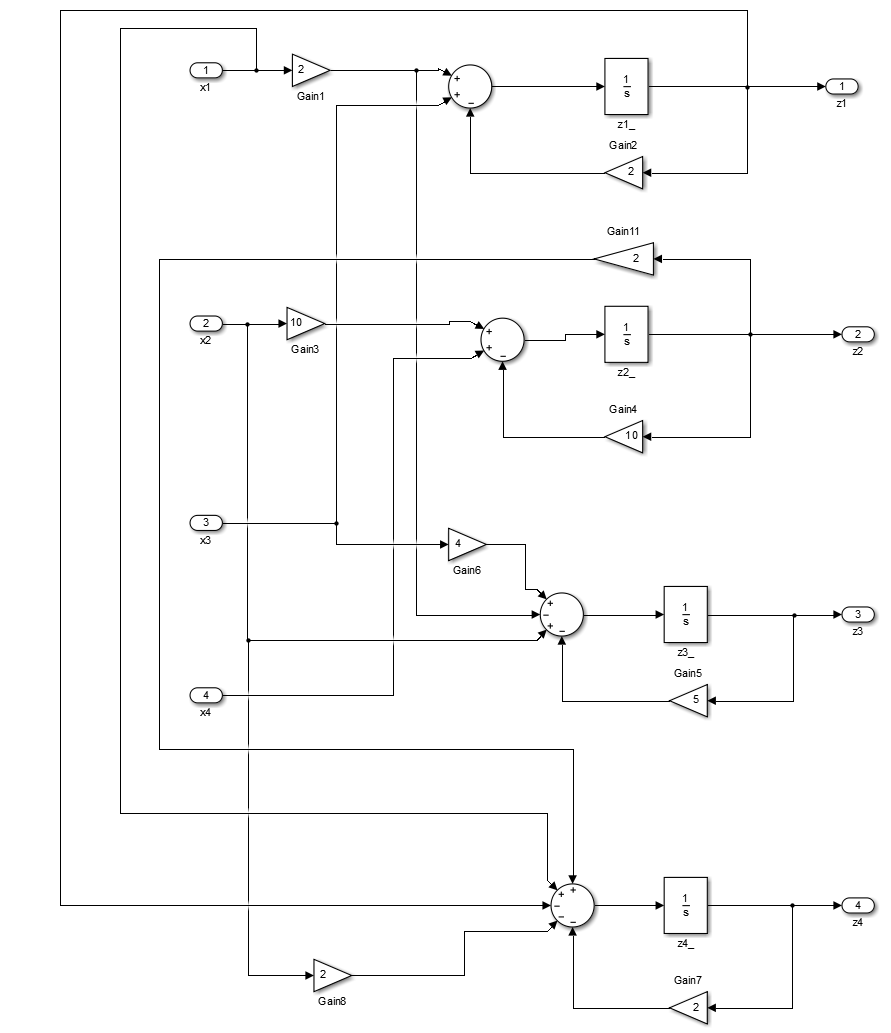}
    \caption{The mass spring system dynamics implemented in linearized form in Simulink.}
    \label{fig:msUIO}
\end{figure}

and to conclude the observer design we compute the $F$ and $K$ matrices:
    \begin{verbatim}
%----------Step 5: Finish observer design---------%
F = A1-K1*C
K = K1 + F*H
\end{verbatim}

        \color{lightgray} \begin{verbatim}
F =                     K = 

    -2  0   0   0           2   0   1   0
     0 -10  0   0           0   10  0   1
     0  0  -5   0          -2   1   4   0
    -1  2   0  -2           1  -2   0   0 
\end{verbatim} \color{black}
The simulation results are considered next.  The top level Simulink block diagram is shown in Figure \ref{ex:msMDLroot}.  The mass spring system dynamics are implemented as shown in Figure \ref{fig:msDynSim}. The UIO implementation is shown in Figure \ref{fig:msUIO}. The state estimation error is shown in Figure \ref{fig:UIOerr}. The following initial conditions were used for the simulation. 

\begin{verbatim}
x1_0       = -1;
x2_0       = -5;
x3_0       = 1;
x4_0       = 1; 
\end{verbatim}

\begin{figure}[H]
    \centering
    \includegraphics[scale=.5]{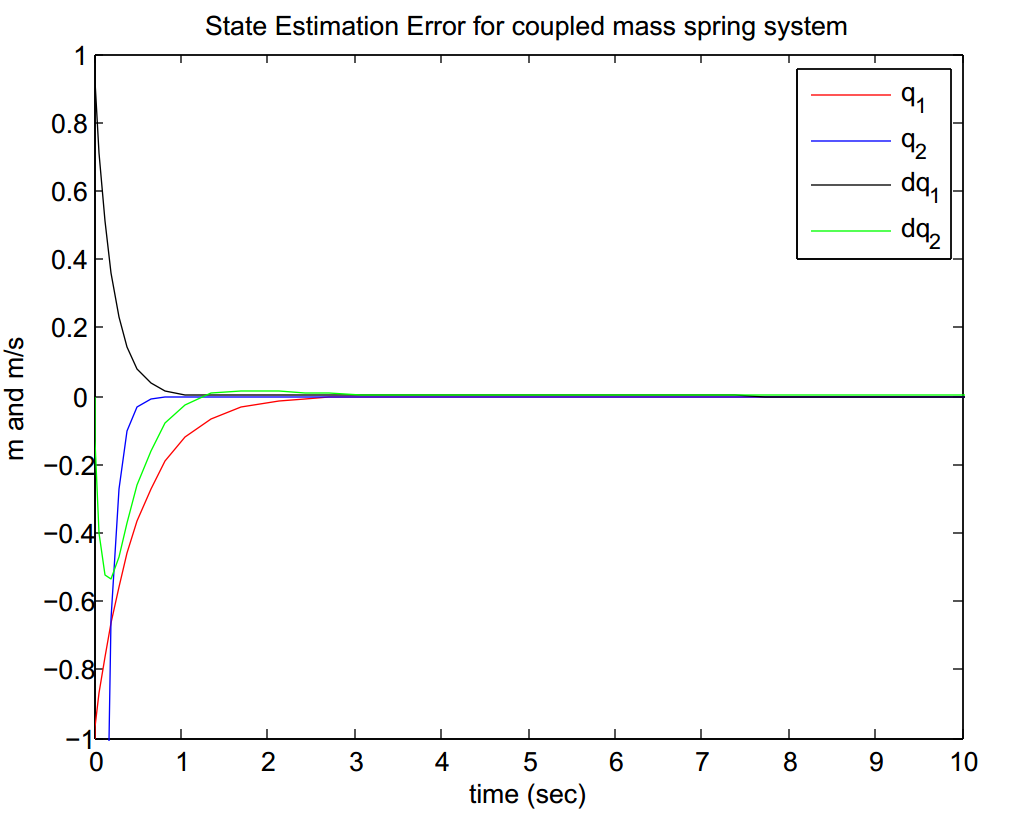}
    \caption{The state estimation error for the mass spring system using the UIO.}
    \label{fig:UIOerr}
\end{figure}
\section{Conclusion}
This memo begins by motivating the UIO paradigm for state estimation.  A main advantage of this approach is to decouple the plant disturbance inputs from the state estimation process.  The UIO design procedure was reviewed and illustrated with three common examples found in control theory.  The estimation error for the coupled mass spring system, the car suspension system and the linear system with no inputs was considered. In each example, it was shown that this error asymptotically approaches zero by utilization of a UIO.
\bibliography{report}
\bibliographystyle{amsplain}

\end{document}